\journalname{Procedia Computer Science}
   \newtheorem{Lemma}{Lemma}
\begin{document}
\begin{frontmatter}



\dochead{23rd International Conference on Knowledge-Based and Intelligent Information \& Engineering Systems}%

\title{Heuristic Algorithm for Generalized Function Matching}


\author{Radu Stefan Mincu} 

\address{Department of Computer Science, University of Bucharest, Bucharest, Romania}

\begin{abstract}
The problem of generalized function matching can be defined as follows: given a pattern $p=p_1 \cdots p_m$ and a text $t=t_1 \cdots t_n$, find a mapping $f:\Sigma_p\rightarrow\Sigma_t^{*}$ and all text locations $i$ such that $f(p_1) f(p_2) \cdots f(p_m) = t_i \cdots t_j$, a substring of $t$.

By modifying the restrictions of the matching function $f$, one can obtain different matching problems, many of which have important applications.
When $f:\Sigma_p\rightarrow\Sigma_t$ we are faced with problems found in the well-established field of combinatorial pattern matching. If the single character constraint is lifted and $f:\Sigma_p\rightarrow\Sigma_t^{*}$, we obtain generalized function matching as introduced by Amir and Nor (JDA 2007). If we further constrain $f$ to be injective, then we arrive at generalized parametrized matching as defined by Clifford et al. (SPIRE 2009).

There are a number of important applications for pattern matching in computational biology, text editors and data compression, to name a few. Therefore, many efficient algorithms have been developed for a wide variety of specific problems including finding tandem repeats in DNA sequences, optimizing embedded systems by reusing code etc.

In this work we present a heuristic algorithm illustrating a practical approach to tackling a variant of generalized function matching where $f:\Sigma_p\rightarrow\Sigma_t^{+}$ and demonstrate its performance on human-produced text as well as random strings.
\end{abstract}

\begin{keyword}
string algorithms \sep pattern matching \sep heuristics 



\end{keyword}
\end{frontmatter}

\email{mincu.radu@fmi.unibuc.ro}

\section{Introduction}

Given a pattern $p=p_1 p_2 \cdots p_m$ over an alphabet $\Sigma_p$ and a text $t=t_1 t_2 \cdots t_n$ over an alphabet $\Sigma_t$, the focus of \textit{generalized function matching} (GFM) is to find all the occurrences within the text of the image of the pattern through a function $f$. In other words, the goal of GFM is to find a function $f:\Sigma_p\rightarrow\Sigma_t^{*}$ and all text locations $i$ such that $f(p_1) f(p_2) \cdots f(p_m) = t_i \cdots t_j$ i.e. $f(p)$ is a substring of $t$.

\subsection*{Previous and Related Work}
In \textit{parametrized function matching} introduced by Baker \cite{baker}, the task is to find a bijection $f:\Sigma_p\rightarrow\Sigma_t$. The motivation in Baker's work was related to software maintenance: finding pieces of code that are identical with the exception of variable renaming, with applications in optimizing embedded systems. 

Later on, Amir et al. \cite{amir} introduce \textit{function matching}, a similar problem to parametrized function matching, where $f:\Sigma_p\rightarrow\Sigma_t$ can be any function. The applications include optimizing embedded systems, techniques for computational biology and image processing.

Following that, Amir and Nor \cite{amirnor} propose \textit{generalized function matching} (GFM) where the pattern is used to describe more complicated substructures inside a text, rather than simply 1:1 substring matching. It is here for the first time that the image of the matching function is a substring of the text, rather than a single character i.e. $f:\Sigma_p\rightarrow\Sigma_t^{*}$.

An $\mathcal{NP}$-completeness result is presented for an altered version of the problem, the so-called DGFM or \textit{generalized function matching with don't cares}, where wildcard symbols may appear in the text or pattern. Text wildcards are treated as being equal to all characters of $\Sigma_t$, while pattern wildcards may match any text substring (i.e. the wildcard symbols found in different pattern locations are considered new, distinct symbols of $\Sigma_p$).

Amir and Nor \cite{amirnor} also describe a simple greedy algorithm to solve GFM that is exponential in the size of the pattern alphabet $|\Sigma_p|$. The algorithm is introduced out of theoretical interest as it shows the problem is solvable in polynomial time for bounded alphabets.

In \cite{clifford} Clifford et al. restrict $f$ to be an injective function and introduce \textit{generalized parametrized matching}. Additionally, Clifford et al. produce an $\mathcal{NP}$-completeness proof for GFM when $f:\Sigma_p\rightarrow\Sigma_t^{+}$.

Results regarding the approximability of the problems are given by Clifford and Popa in \cite{cliffpopa}. For a detailed survey regarding generalized function/parametrized matching we refer the reader to Ordyniak and Popa \cite{OrdyniakOther,Ordyniak}.

\begin{Lemma}{Whole text matching is equivalent to substring matching.}
Finding $f:\Sigma_p\rightarrow\Sigma_t^{+}$ such that $f(p) = t_i \cdots t_j$ is equivalent to finding $f:\Sigma_p\cup \{\alpha,\beta\}\rightarrow(\Sigma_t\cup\{\#,\$\})^{+}$ such that $f(p')=f(\alpha p \beta) = \# t \$$, where $\{\alpha,\beta\} \cap \Sigma_p = \emptyset$ and $\{\#,\$\} \cap \Sigma_t = \emptyset$.
\end{Lemma}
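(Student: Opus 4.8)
The plan is to prove the equivalence by giving two explicit constructions, one turning a witness of the substring problem into a witness of the whole-text problem and one doing the reverse; since the two constructions are mutually inverse up to the position data, this simultaneously shows that the set of text locations $i$ attainable in the two formulations coincides.

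For the forward direction I would start from a witness $f:\Sigma_p\rightarrow\Sigma_t^{+}$ with $f(p_1)\cdots f(p_m)=t_i\cdots t_j$ and define $f'$ on $\Sigma_p\cup\{\alpha,\beta\}$ by $f'(q)=f(q)$ for $q\in\Sigma_p$, $f'(\alpha)=\#\,t_1\cdots t_{i-1}$ and $f'(\beta)=t_{j+1}\cdots t_n\,\$$. Two things must be verified: that $f'$ is an admissible matching function, i.e.\ all images are nonempty, and that $f'(\alpha)f'(p_1)\cdots f'(p_m)f'(\beta)=\# t\$$. The first holds precisely because of the prepended $\#$ and the appended $\$$ — note that $t_1\cdots t_{i-1}$ or $t_{j+1}\cdots t_n$ may be the empty word, and this is exactly why the two sentinels are introduced — and the second is an immediate telescoping of the concatenation.

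For the converse I would start from a witness $f'$ of $f'(\alpha p \beta)=\# t\$$ and show that $f:=f'|_{\Sigma_p}$ solves the substring problem. The key structural observations are: since $\alpha,\beta\notin\Sigma_p$ and $\alpha\neq\beta$, the letter $\alpha$ occurs in $\alpha p \beta$ only as the first symbol and $\beta$ only as the last; dually, since $\#,\$\notin\Sigma_t$, the symbol $\#$ occurs in $\# t\$$ only at the first position and $\$$ only at the last. Reading $\# t\$=f'(\alpha)f'(p_1)\cdots f'(p_m)f'(\beta)$ as a decomposition into consecutive blocks, the block assigned to $\alpha$ is a nonempty prefix and hence contains position $1$, i.e.\ the unique $\#$; symmetrically the block of $\beta$ is a nonempty suffix and contains the unique $\$$. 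Consequently none of the blocks $f'(p_1),\ldots,f'(p_m)$ contains a sentinel, so each $f'(p_k)$ in fact lies in $\Sigma_t^{+}$, and their concatenation $f(p)=f'(p_1)\cdots f'(p_m)$ is a contiguous substring of $\# t\$$ that uses no sentinel, hence a substring $t_i\cdots t_j$ of $t$ with $i=|f'(\alpha)|$. Thus $f$ is a valid whole-substring match.

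I expect the only delicate point — the one I would write out in full — to be this block argument in the converse: one must justify rigorously that a $p_k$-block cannot swallow a sentinel. This rests on combining two facts, that each sentinel occurs exactly once in the target string and that every image $f'(\cdot)$ is nonempty, so that the two extreme blocks cannot degenerate and are forced to absorb those unique occurrences; everything else reduces to routine bookkeeping about concatenations and the index shift between $t$ and $\# t\$$. A small hypothesis worth making explicit is $\alpha\neq\beta$ and $\#\neq\$$, implicit in the stated disjointness conditions, since it is what makes ``occurs exactly once'' true.
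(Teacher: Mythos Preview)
Your proposal is correct and follows essentially the same approach as the paper: the sentinel symbols force $f'(\alpha)$ to be a nonempty prefix of $\#t\$$ absorbing $\#$ and $f'(\beta)$ a nonempty suffix absorbing $\$$, so that the middle blocks form a substring of $t$. The paper compresses this into a two-sentence sketch, whereas you spell out both directions and the block argument in full; your added remark that one needs $\alpha\neq\beta$ and $\#\neq\$$ is a fair point the paper leaves implicit.
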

\begin{proof}
The newly introduced pattern symbol $\alpha$ can only be matched to the newly introduced text symbol $\#$ or to a non-empty prefix of $t$. Similarly, $\beta$ can only be matched with $\$$ or a non-empty suffix of $t$. The remainder of $t$ (which is a substring) has to match the pattern $p$ in the case of a valid match for $p'$.
\end{proof}

Using the previous result, we may focus on matching the pattern with the entire text instead of every text substring.

We make a note here that for the algorithm to be of practical use, we restrict $f:\Sigma_p\rightarrow\Sigma_t^{*}$ to $f:\Sigma_p\rightarrow\Sigma_t^{+}$. Should we allow empty substrings in the image of $f$, then we can choose to map variations of $p$ from which we have removed some symbol(s) from its alphabet (effectively mapping them to the empty text substring). The practical relevance of this behavior of GFM is arguable, but we choose to disallow empty substrings to be matched hereafter.

The paper is structured as follows. In Section \ref{sec:algo} we give an informal description for the proposed algorithm and afterwards provide a general pseudocode implementation and discuss its running time. Section \ref{sec:decomp} introduces a heuristic for pattern decomposition that helps reduce the running time when the pattern has a certain structure. In Section \ref{sec:exp} we present two experiments, one focusing on English sonnets' structure and the other describing algorithm running time on random data.

\section{The Proposed Algorithm for GFM}
\label{sec:algo}

In this section we detail the inner workings of the heuristic algorithm used to tackle GFM.

\subsection{Description of the Algorithm}

The input to the algorithm is a finite pattern $p=p_1 p_2 \cdots p_m$ and a finite text over some alphabet $t = t_1 t_2 \cdots t_n$. The output of the algorithm is a matching function $f:\Sigma_p\rightarrow\Sigma_t^{+}$ such that $f(p_1) f(p_2) \cdots f(p_m) = t$.

The key idea of the algorithm is to focus on the repetitions in the pattern and on the repetitive structure of the input text. The repetitive structure of the input text can be efficiently discerned by using a suffix tree and this is the advantage we choose to exploit when designing the present algorithm. What remains is to understand how we may match the repetitive structure of the pattern with the repetitive structure of the text.

First, assume that the given pattern has no repetitions. If this is the case then the problem becomes an exercise in listing all possible factorizations into $|p|$ non-empty pieces of all substrings of the text. While not entirely trivial, this is of little practical interest.

Otherwise, there is at least one symbol in the pattern occurring more than once. In this case, we separate the repeating symbols from the non-repeating ones. We will refer to the resulting pair of subsequences as the repetitive subsequence, denoted by $p^{rep}$, and the non-repetitive subsequence, denoted by $p^{nrep}$, respectively (e.g. if \lstinline|$p=$fabbcdaae|, \lstinline|$p^{rep}=$abbaa|, \lstinline|$p^{nrep}=$fcde|).

By using the aforementioned repetitive subsequence $p^{rep}$ we can divide the problem into firstly matching $p^{rep}$ with the text and secondly ensuring that there exists a valid mapping for the rest of the symbols in the non-repetitive subsequence $p^{nrep}$. This will include checking whether the symbols in $p^{nrep}$ are matched to non-empty substrings of the text. 


The overall strategy to tackle GFM can now be formulated as follows:
\begin{enumerate}
\item Discern the repetitive structure of the input text (here we use a suffix tree).
\item Process the pattern, separating a repetitive subsequence and a non-repetitive subsequence.
\item Find the matches of the repetitive subsequence with the text.
\item Ensure that the symbols in the non-repetitive subsequence can be matched with non-empty strings.
\end{enumerate}

\subsection{The Subsequence Matching in Detail}

\subsubsection*{Text Processing}

First of all, it is necessary to build a suffix tree for the input text and store the repetition information into a suitable structure. For the suffix tree construction step we use an implementation of Ukkonen's algorithm \cite{Ukkonen}.  To record a repetition of a substring of the text we have considered the following structure $(length,occ,[i_1,\dots,i_{occ}])$, where $length$ is the number of characters in the repeating substring, followed by a list of positions where the string appears in the text ($i_t$, $ 2\leq t \leq occ$), $occ$ being the number of occurrences. In the rest of the paper, the term \textit{repetition} will be interpreted as this previously defined structure.

A list of repetition tuples as described above may be obtained by traversing the constructed suffix tree. Upon traversal, we may choose to discard some redundant information: the suffixes of a repeating substring that have the same number of occurrences as the substring. The benefits of this choice include reducing the search space and memory required for the algorithm. However, this leads to missing some solutions. To ensure algorithm correctness, the entire list of repetition tuples is required. In our heuristic approach, we may choose to search a reduced repetition list for the first iterations of the algorithm (leading to a much faster search) and then gradually recover the lost information in subsequent iterations, should no match be found.

After the repetition list is sorted by length we merge all of the position lists of the repetitions (and keep a record of the index of the corresponding repetition and the index of the position inside the position list). In case of same starting positions, the position of the shorter repeated string appears first for reasons soon to be discussed in the matching algorithm. This merged list will be used to match the repetition pattern to the text.

\subsubsection*{Pattern Processing}

For the pattern processing step, we first count the occurrences of each pattern symbol. Symbols that appear multiple times will form the repetitive subsequence. Additionally, we record the length of each substring made of non-repeating symbols appearing between each repeating symbol. For an example consider \lstinline|$p=$fabbcdaae|: the repetitive subsequence is \lstinline|$p^{rep}=$abbaa| and the non-repeating string length information can be represented as $count^{nrep}=$ 1:0:0:2:0:1. 

\subsubsection*{Subsequence Matching}

To find a match between the repetitive subsequence and the text, we will map (substitute) each symbol $\sigma$ in the repetitive subsequence of the pattern with the index of a repetition having at least as many occurrences as the occurrences of the symbol in the pattern ($occ \geq |p|_\sigma$).

The key  observation is that if the repetitive pattern matches the text, then, for some repetition index substitution, the substituted pattern will be a subsequence inside the merged repetition list. While this is bound to happen, it may still be difficult to obtain a valid matching $f$.

\subsubsection*{String Trimming}

Once an occurrence of the repetitive subsequence has been detected in the merged repetition list, it is necessary to ``trim'' any overlapping substrings. In short, the starting point of any substring will cut the length of the previous substring. This change will affect all substrings that belong to a repetition (they must be identical). If the trimming operation ends up producing empty strings, then the match is rejected at this point. 


Another detail of the trimming process is that it is necessary to sometimes shorten a repeating string (cut away a suffix) to make room for the non-repetitive pattern substring in-between. In other words, we must allow the length of the text substring between repeating pattern symbol matches to be at least the length of the non-repetitive pattern substring in-between (effectively ensuring that non-repeating pattern symbols can each be matched to a non-empty string). 

After trimming, we use the information regarding the non-repetitive string lengths to enforce adjacency between repeating strings should the length be 0. If this is not possible, reject at this step.

%
%

\subsubsection*{Repetition Splitting}

If the algorithm reports no match (or insufficient matches) after the previous steps then it is necessary to reintroduce some of the redundant information that was discarded during the repetition list construction. To do this, we simply split each repetition into ``halves'' by the following method: \lstinline|$\{$abcd$\}$| $\rightarrow$ \lstinline|$\{$abcd$,$cd$\}$| $\rightarrow$ \lstinline|$\{$abcd$,$bcd$,$cd$,$d$\}$|. Correctness is only ensured by employing the previously described trimming procedure.

By performing this operation multiple times on the repetition list we can recover all of the discarded information. The reason for recovering this information gradually with this heuristic is that the size of the repetition list greatly impacts the expected running time.

\subsection{Algorithm pseudo-code}

For an insight into the intricacy of the algorithm described in this section, we refer the reader to the algorithm blocks that contain pseudocode for our proposed algorithm (Proposed Algorithm block) and for the previous algorithm by Amir and Nor \cite{amirnor}, augmented with some optimizations (Previous Algorithm block).

\begin{algorithm}[h!]
\floatname{proposed}
\caption{\textbf{Proposed Algorithm:} whole-text matches input pattern $p$ with input text $t$ and outputs the respective partitioning of $t$. The notation $|w|_a$ denotes a procedure that returns the number of occurrences of symbol $a$ in string $w$. The operator $\cdot$ denotes string concatenation. $L^X$ is the set of functions $map: X \rightarrow L$. To keep a high-level view, the complete trimming procedure is not included (it is lengthy, contains subcases for the beginning and ending of the text etc.). The variable $subseq$ denotes a list of positions in the text where the respective subsequence occurs.

\textit{input:} string $p$,$t$;
\textit{output:} integer $u[1..|p|$]; (such that $t_{u[i]} \cdots t_{u[i+1]}=f(p_i)$)
}
\label{alg:greedy1}
\hrule
\begin{algorithmic}[1]
\State $t \gets \# t \$$, $p^{rep} \gets \emptyset$, $count^{nrep}_1 \gets 0$, $j \gets 1$
\State build a suffix tree $ST(t)$
\State traverse $ST(t)$ and build a list $L$ of repetition structures $(length,occ,pos[1..occ])$
\State $repStruct \gets$ merge and sort the $pos$ arrays of all repetitions in $L$
\For {$i \gets 1,|p|$}
\If {$|p|_{p_{i}}>1$}
\State $p^{rep} \gets p^{rep} \cdot p_i$
\State $j \gets j+1$, $count^{nrep}_j \gets 0$
\Else
\State $count^{nrep}_j \gets count^{nrep}_j+1$
\EndIf
\EndFor
\State $X = \Sigma_p^{rep} \gets \Sigma(p^{rep})$, 
\For {all mapping functions $map \in L^X$}
\For {all occurrences $subseq$ of $map(p^{rep})$ as a subsequence in $repStruct$ }
\State $L^{copy} \gets L$
\For{$j \gets 1,|subseq|$}
\State trim the $length$ of $L^{copy}_i$ by $subseq_{j+1} - count^{nrep}_j$ where $map(p^{rep}_j)=L^{copy}_i$
\State insert the starting positions of the non-repetitive text substrings into $subseq$
\EndFor
\If {trimming produces no empty substring}
\State \textbf{output} $subseq$
\EndIf 
\EndFor
\EndFor

\end{algorithmic}
\end{algorithm}

\begin{algorithm}[!hbt]
\floatname{previous}
\caption{\textbf{Previous Algorithm:} matches input pattern $p$ with input text $t$ and outputs the respective partitioning of $t$.

\textit{input:} string $p$,$t$;
\textit{output:} integer $u[1..|p|$]; (such that $t_{u[i]} \cdots t_{u[i+1]}=f(p_i)$)
}
\label{alg:greedy1}
\hrule
\begin{algorithmic}[1]
\State $i \gets 1$
\For {$c \in \Sigma_p$}
\State $index_c \gets i$, $lengths_i \gets 1$, $i \gets i+1$
\EndFor
\For {$i \gets 1,|t|$}
\State $availableLen \gets |t|-i$
\For {every combination of $lengths$ s.t. $sum(lengths) \leq availableLen$}
\For {$j \gets 1,|\Sigma_p|$}
\State $starts_j \gets =-1$
\EndFor
\State $OK \gets true$, $last \gets i$
\For {$j \gets 1,|p|$}
\State $c \gets p_j$, $k \gets index_c$
\If {$starts_k=-1$}
\State $starts_k \gets last$
\Else
\For {$\ell \gets last,last+lengths_k$}
\If {$t_{\ell} \neq t_{\ell+start_k-last}$}
\State $OK \gets$ false, \textbf{break}
\EndIf
\EndFor
\If{$OK=$ false}
\State \textbf{break}
\EndIf
\EndIf
\State $last \gets last + lengths_k$
\EndFor
\State $last=i$, \textbf{output} $last$
\For {$j \gets 1,|p|$}
\State $c \gets p_j$, $k \gets index_c$
\State $last \gets last + lengths_k$, \textbf{output} $last$  
\EndFor
\EndFor
\EndFor
\end{algorithmic}
\end{algorithm}

\subsection{Time complexity analysis}

In the upcoming analysis we consider $|t|=n$ and $|p|=m$. Additionally we use the denomination \textit{uniformly random string} or \textit{uniform string} when we refer to a string that is built by concatenating symbols in a finite, given alphabet and each symbol has equal probability to be chosen for each position in the string.

From the proposed algorithm's pseudocode we observe that the computation time is dominated by the two loops in lines 11-12. Before that, a suffix tree is built in linear time $O(n)$ using Ukkonen's algorithm, then the list of repetitions $L$ is recovered by traversing the tree in $O(n)$. Following, the repetition list has $O(n)$ structures each containing $O(n)$ integers in their list of occurrences (in the case of uniformly random strings the expected length of these lists is $O(n|\Sigma_t|^{-1})$). As a result, merging and sorting the lists of occurrences takes up to $O(n^2 \log n)$ time (and $O(n^2 |\Sigma_t|^{-1} \log n^2 |\Sigma_t|^{-1})$ expected time for uniform strings) using comparison based sorting. The expected size of the sorted occurrence list is therefore between $O(n^2 |\Sigma_t|^{-1})$ in the uniform string case and $O(n^2)$ in general . The pattern processing step is completed in $O(m)$.

The dominating term of the time complexity of the proposed algorithm is depending on the size of the repetition list. The number of $map: \Sigma^{rep}_p \rightarrow L$ functions is $O(n^{|\Sigma_p|}$). For each such function, we search for a subsequence of length $|p^{rep}|$ inside the merged repetition occurrence list.

The search for the first occurrence of a subsequence in the lexicographic order can be done in linear time with regard to the size of the occurrence list (uniform string case $O(n^2 |\Sigma_t|^{-1})$, in general $O(n^2)$). The next subsequence ocurrence in the lexicographic order of positions can be found in time $O(mn^2 |\Sigma_t|^{-1})$ for the uniform string case, $O(mn^2)$ in general). In the worst case, there are generally ${n^2 \choose m} = O(n^{2m})$ subsequence occurrences and $O(n^{2m}|\Sigma_t|^{-m})$ for uniform strings. For each occurrence a copy of the repetition list is made. Actually, only the repetitions that appear in the image of the mapping function need to be duplicated, so this step takes only $O(|\Sigma_p|)$.

The trimming of the length of each repetition occurrence against the starting position of the next occurrence will take constant time, so the entire trimming procedure takes time $O(m)$, including the handling of the non-repetitive subsequence.

With the previous explanation, the total time complexity of the algorithm is generally $O(mn^{2m+|\Sigma_p|})$ and for uniform strings $O(mn^{2m+|\Sigma_p|} \cdot |\Sigma_t|^{-m})$.

The reference implementation of the algorithm by Amir and Nor has been optimized to $O(n^{|\Sigma_p|+1})$.

\section{Pattern decomposition heuristics}
\label{sec:decomp}

Our search for matches is based around finding a subsequence within the repetition structure obtained from all possible substitutions of the repetitive pattern subsequence with a valid assignment. As such, even for relatively small tests, if the text has a highly repetitive structure, then the search is prohibitively slow.

Our proposed approach to mitigate this drawback is a divide-and-conquer heuristic. The idea is that if a pattern can be split into pieces that have disjoint alphabets, then we can try to match once of the pieces to obtain a candidate solution fragment to the whole pattern match problem. Subsequently, for each candidate fragment identified, we solve the problem of matching the remaining pieces of the pattern with the remaining pieces of the text.

To illustrate our approach consider the following example. Let $p=ABABCDCDEFEFGG$ the rhyme structure of Shakespearean sonnets (see Subsection \ref{sec:toy} for an experiment regarding this pattern). We can split $p$ in the following format to facilitate our search: $p=ABAB \cdot CDCD \cdot EFEF \cdot GG$. In other words, first we locate an occurrence of $ABAB$, then we find a match of $CDCD$ that begins on the first position of the remaining text, and so on.

This pattern splitting method is better formalized by using whole-text matching. If we try to match $p=*ABABCDCDEFEFGG*$ to the entire text, then we can easily represent the split pattern as $(*ABAB*, CDCD*, EFEF*, GG*)$. The previously described technique only works left-to-right: first $*ABAB*$ is matched to the text, then $CDCD*$ is matched to the text substring that matches the rightmost $*$ symbol from $*ABAB*$, then $EFEF*$ is matched to the text substring matching $*$ from $CDCD*$, and so on.

So far we have described a type of pattern splitting of $p$ into as many pieces $q_i$ as possible such that $p=q_1 \cdots q_k$ and all $q_i$ have disjoint alphabets i.e. $\Sigma(q_i)\cap \Sigma(q_j) = \emptyset$, $\forall i \neq j$. This method is based on splitting the pattern in the manner $p= q r$ with $\Sigma(q)\cap \Sigma(r) = \emptyset$. A more general version of this pattern splitting method is to reduce the pattern ``inwards'' as inspired by palindrome type patterns. This method will separate $p = q p' r$ with $p' \cap ( q \cup r ) = \emptyset$. In this case, a whole-text match of $p = * q p' r *$ will be separated into matching $* q \! * \! r *$ first and subsequently matching $p'$ with the text substring matching the middle wildcard of $* q \! * \! r *$.

\section{Experimental results}
\label{sec:exp}

\subsection{A toy experiment}
\label{sec:toy}

To showcase the proposed algorithm, we prepare a toy example inspired from \cite{amirnor}, where the authors give music and poetry analysis as a motivation for GFM and give the example of the Shakespearean sonnet: a poem with a very rigid structure that respects the rhyme template $ABABCDCDEFEFGG$.

In this first experiment we use as a pattern $p=ABABCDCDEFEFGG$ and as a text we randomly select 6 sonnets from the work of Shakespeare, concatenate them, transcribe them to IPA (International Phonetic Alphabet) and keep only the last character of each verse.

In Figure \ref{fig:sonnets} we use a graphical representation of the text by using a technique similar to the one proposed by Wattenberg \cite{Wattenberg} namely the arc diagram, where the ends of each arc indicate exact occurrences of a substring and the arc width represents the length of the repeated substring. Below the gray repetition structure we also display the matches found with our algorithm using red hues. In blue we show a correct match that is not a sonnet (but has a structure that matches the pattern).

It is a noteworthy and surprising find that the third sonnet is not a match for the given pattern. Upon closer inspection the culprit is Shakespeare's Sonnet XLIX, which uses an imperfect rhyme in verses 5-8. Thus, the mismatch is not through any fault of the algorithm (it is working correctly for proper matches).

\begin{figure}
\caption{Graphical representation of the repetitive structure (in grey) of a string obtained by concatenating the last character in the International Phonetic Alphabet transcription of each verse of 6 concatenated Shakespearean sonnets (delimited by the red markers). Mirrored and in red hues are the matches of the Shakespearean sonnet rhyme pattern $ABABCDCDEFEFGG$ as produced by our proposed algorithm. In blue is an example of a correct pattern match that is not a sonnet itself, but is an effect of concatenation.} \label{fig:sonnets}

\includegraphics[width=\textwidth]{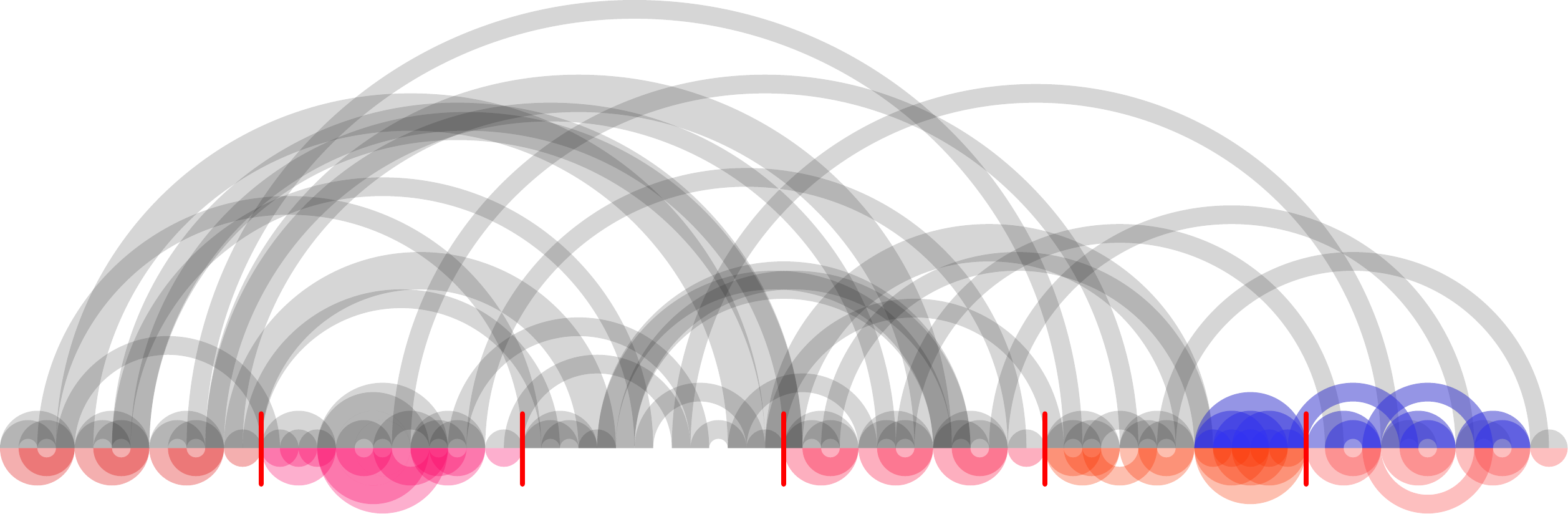}
\end{figure}

\subsection{Running time analysis}

For the purpose of determining running time, our proposed algorithm against is tested against our own optimized implementation of the previous Greedy algorithm proposed by Amir and Nor \cite{amirnor}.

The dataset used in the experiments contains uniformly random strings (meaning that each position in an uniform string can be any symbol in a fixed alphabet $\Sigma$ with equal probability $\frac{1}{|\Sigma|}$). The test patterns have the additional restriction that the first and last symbol must repeat in the pattern (at least once) and there can be no two consecutive non-repeating symbols. If this were not the case then the number of valid solutions can needlessly multiply by $O(|t|)$ e.g. when $p=ab$ is matched to any text $t$ then $a$ is matched to $|t|-1$ prefixes of the text. 

We summarize the results of the experiments using Figures \ref{fig:plot1} and \ref{fig:plot2}: on the $x$-axis are separate test instances, while on the $y$-axis we plot the recorded wall clock time. Both algorithms are run on every test instance and we do not employ pattern decomposition heuristics (from Section \ref{sec:decomp}) for a proper comparison. The results in Figures \ref{fig:plot1} and \ref{fig:plot2} are for 500 instances each.

In Figure \ref{fig:plot1} we observe the exponential growth in time required for the proposed algorithm with regard to the amount of distinct repeating substrings in the text (which grows inversely proportional to the size of the text alphabet). We also showcase the differences induced by different patterns, which end up being quite significant. Here we analyze the algorithms on all patterns of length 5 and alphabet size 3 that respect the previously mentioned restriction (amounting to 10, disregarding isomorphism). The previous algorithm by Amir and Nor seems mostly unaffected by text repetition, showing only a weak linear growth with regard to this parameter (which is in the range 60-130).

In Figure \ref{fig:plot2} we show the exponential growth in the time of the algorithm by Amir and Nor with regard to pattern alphabet size. All of the test instances use the same text and the pattern alphabet varies. We sort the results by pattern alphabet size and execution time of the previous algorithm. While the greedy algorithm is quite predictable and is highly dependent on pattern alphabet, there is much greater variance in the time required for the proposed algorithm, which is highly dependent on the pattern structure as well (adjacency of repeating pattern symbols etc.). For all of the instances, we stop the algorithms when the time exceeds 1 second. It is noteworthy that the expected running time for the algorithm of Amir and Nor for pattern alphabet size 4 is around 21-25 seconds.

It is interesting that if we keep the best time of the two algorithms, the test instances are all solved under 1 second.

\begin{figure}
\caption{Algorithm running time for uniformly random pattern strings and uniformly random text strings with the following parameters: fixed pattern length $|p|=5$, fixed pattern alphabet size $|\Sigma_p|=3$, fixed text length $|t|=200$, variable text alphabet size $|\Sigma_t| \in \{4,\dots,16\}$.} \label{fig:plot1}
\includegraphics[width=\textwidth ,trim={0 0 0.53cm 0},clip]{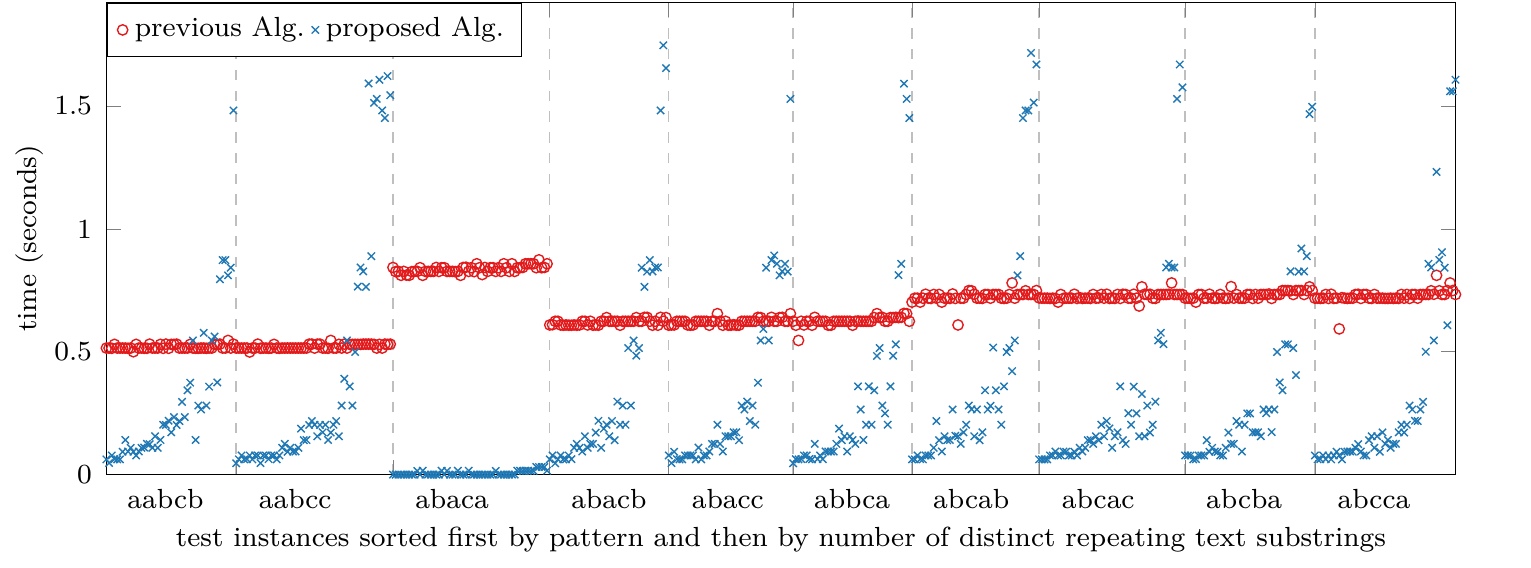}
\end{figure}

\begin{figure}
\caption{Algorithm running time for uniformly random pattern strings and only one uniformly random text string with the following parameters: fixed text (fixed text length $|t|=200$, fixed text alphabet size $|\Sigma_t|=16$, distinct repeating substring count is 70), fixed pattern length $|p|=6$, variable pattern alphabet size $|\Sigma_p| \in \{2,3,4\}$.} \label{fig:plot2}
\includegraphics[width=\textwidth]{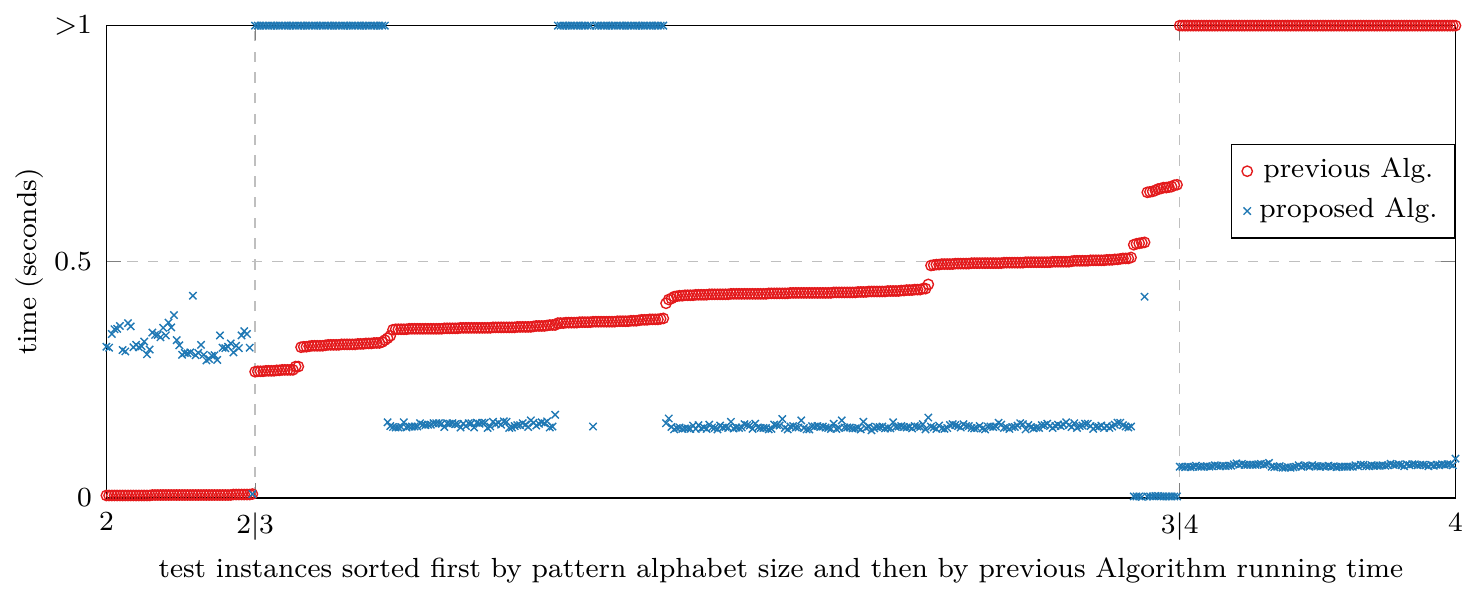}
\end{figure}

\section{Conclusion and open problems}

We show that while the algorithm by Amir and Nor works well for patterns with a small alphabet, there are cases where our algorithm is better, notably on text that has a low number of repeating substrings (which is commonly the case for English text, since the alphabet is not small enough to prohibit our proposed algorithm). We also notice that there are types of patterns for which our algorithm outperforms the previously known algorithm.

For future work it is of interest to design more algorithms for generalized function matching and develop a specialized toolkit that takes into account the pattern structure and the text repetitiveness and automates the choice of the appropriate algorithm.

\bibliographystyle{elsarticle-harv}
\bibliography{bibliography}

\end{document}